\newcommand{\bra}[1]{\ensuremath{\langle #1 |}}
\newcommand{\ket}[1]{\ensuremath{| #1 \rangle}}
\newcommand*{\cH}{\mathcal{H}}
\newcommand{\pmset}[1]{\{\pm1\}^{#1}}
\newcommand{\psiket}{\ket{\psi}}
\newcommand{\psibra}{\bra{\psi}}
\newcommand{\beq}{\begin{equation}}
\newcommand{\eeq}{\end{equation}}
\newcommand{\beqn}{\begin{equation*}}
\newcommand{\eeqn}{\end{equation*}}
\newcommand{\beqrn}{\begin{eqnarray*}}
\newcommand{\eeqrn}{\end{eqnarray*}}
\newcommand{\CC}{\mathbb{C}}
\newcommand{\RR}{\mathbb{R}}
\def\one{\leavevmode\hbox{\small1\normalsize\kern-.33em1}}
\newtheorem{theorem}{Theorem}
\newtheorem{lemma}[theorem]{Lemma}
\newtheorem{proposition}[theorem]{Proposition}
\theoremstyle{definition}
\newtheorem{definition}[theorem]{Definition}
\renewcommand{\qed}{\hfill{\rule{2mm}{2mm}}}
\renewenvironment{proof}[1][]{\begin{trivlist}
\item[\hspace{\labelsep}{\bf\noindent Proof#1:\/}] }{\qed\end{trivlist}}
\renewcommand\vec[1]{\ensuremath{#1}}
\newcommand\vsequence[2]{\ensuremath{{\vec #1_1,\ldots,\vec #1_{#2}}}}
\newcommand\ssequence[2]{\ensuremath{{{#1}_1,\ldots,{#1}_{#2}}}}
\newcommand\kgnm{\ensuremath{K_G(n\mapsto m)}}
\newcommand\eps\varepsilon
\begin{document}

\title{A generalized Grothendieck inequality and entanglement in XOR games} \author{ Jop
  Bri\"et\footnote{Centrum voor Wiskunde en Informatica, Kruislaan
    413, 1098 SJ Amsterdam, The Netherlands. Supported by Vici grant
    639-023-302 from the Netherlands Organization for Scientific
    Research (NWO), by the European Commission under the Integrated
    Project Qubit Applications (QAP) funded by the IST directorate as
    Contract Number 015848, and by the Dutch BSIK/BRICKS
    project. BT is also supported by BQP Solutions Pty Ltd. Email: jop.briet@cwi.nl, buhrman@cwi.nl, bentoner@bentoner.com.} \and Harry Buhrman$^*$ \and Ben
  Toner$^*$}

\maketitle

\begin{abstract}Suppose Alice and Bob make local two-outcome
  measurements on a shared entangled state. For any~$d$, we show that
  there are correlations that can only be reproduced if the local
  dimension is at least~$d$. This resolves a conjecture of Brunner et
  al. [\textit{Phys. Rev. Lett.}  100, 210503 (2008)] and establishes
  that the amount of entanglement required to maximally violate a Bell
  inequality must depend on the number of measurement settings, not
  just the number of measurement outcomes. We prove this result by
  establishing the first lower bounds on a new generalization of
  Grothendieck's constant.
\end{abstract}

\section{Introduction}
Grothendieck's inequality first arose in the study of norms on tensor
products of Banach spaces~\cite{grothendieck53:_resum}. It has since
found many applications in mathematics and computer science, including
approximation algorithms~\cite{alon04:_approx,Charikar2004} and
communication complexity~\cite{Linial2007,Linial2008}. In quantum information, it
quantifies the difference between the classical and quantum values of
certain simple Bell inequalities, as established by
Tsirelson~\cite{Tsirelson:85b}. Tsirelson's work has been the
starting point for considerable recent 
research into quantum
nonlocality~\cite{brunner:210503,toner:groth1,Cleve:04a,regev:_simul}.

We start by stating the
inequality in its strongest form, in terms of \emph{the real Grothendieck
  constant} $K_G$. 

\begin{definition}
The \emph{real Grothendieck constant of order $n$}, is the smallest
real number $K_G(n)$ such that: For all positive
integers $r$ and for all 
real $r
\times r$ 
matrices $M = (M_{ij})$, 
the inequality 
\begin{align}\label{gineq}
  \max_{\substack{\vsequence{a}{r}\\\vsequence{b}{r}}}\,
  \sum_{i,j}M_{ij} 
\vec a_i \cdot \vec b_j \leq
K_G(n)\max_{\substack{\ssequence{\alpha}{r}\\\ssequence{\beta}{r}}}\,\sum_{i,j}M_{ij}\alpha_i
\beta_j  \end{align}
holds, where the maximum on the left-hand side is taken over all 
sequences $\vsequence{a}{r},\vsequence{b}{r}$ of $n$-dimensional real unit
vectors, $a_i \cdot b_j$ denotes the Euclidean inner product of
$a_i$ and $b_j$, and the maximum on the right-hand side is taken over
all for all sequences 
$\vsequence{\alpha}{r},\vsequence{\beta}{r}$ of real numbers in the
set $\{-1,+1\}$. 

The \emph{real Grothendieck constant}, denoted $K_G$, is defined as
$\lim_{n\to\infty}K_G(n)$.  \label{def:1}
\end{definition}

 The tightest version of the
inequality known is due to Krivine~\cite{Krivine:79a}, who proved that $K_G
\leq \pi/(2\ln (1 + \sqrt 2))\approx 1.78.$ 
Davie~\cite{davie84:_lower_k_g} and, independently, Reeds~\cite{reeds91} are responsible for the best lower bounds: they showed
that $K_G \gtrsim 1.68$. The exact value of $K_G$ is unknown.

In this paper, we give a new generalization of Grothendieck's
inequality. We replace the maximization over scalars on the
right-hand side of Eq.~\eqref{gineq} with a maximization over real unit vectors of dimension
$m<n$. More formally:

\begin{definition}
Let $m$ and $n$
be positive integers with $m<n$. Let $K_G(n \mapsto m)$ be
the smallest real number such that: For all positive
integers $r$ and for all 
real $r
\times r$ 
matrices $M = (M_{ij})$, the inequality
\begin{align}\label{gengroth}
  \max_{\substack{\vsequence{a}{r}\\\vsequence{b}{r}}}\,
\sum_{i,j} M_{ij} {\vec a_i \cdot \vec b_j}\leq
\kgnm
  \max_{\substack{\vsequence{a'}{r}\\\vsequence{b'}{r}}}\, \sum_{i,j} M_{ij} {\vec
a'_i \cdot \vec b'_j}
\end{align}
holds, where the maximum on the left-hand side is taken over all 
sequences $\vsequence{a}{r},\vsequence{b}{r}$ of $n$-dimensional real unit
vectors, and the maximum on the right-hand side is taken over
all sequences
$\vsequence{a'}{r},\vsequence{b'}{r}$ of $m$-dimensional real unit
vectors.\label{def:2} This generalizes Definition~\ref{def:1} in the sense that $K_G(n) = K_G(n\mapsto 1)$.
\end{definition}

Building on the techniques Grothendieck used to prove the original
lower bound on $K_G$~\cite{grothendieck53:_resum}, we prove the following lower
bound on $K_G(n\mapsto m)$.
\begin{theorem}\label{vgrothlb} 
For all $m < n$, 
\begin{align}
  \label{eq:1}
K_G(n\mapsto  m) &\geq
\frac{m}{n}\left(\frac{\Gamma(\frac{m}{2})}{\Gamma(\frac{m+1}{2})}\frac{\Gamma(\frac{n+1}{2})}{\Gamma(\frac{n}{2})}\right)^2\\
&= 1 + \frac{1}{2m} -
\frac{1}{2n}- O(\frac{1}{m^2}).  
\end{align}
\end{theorem}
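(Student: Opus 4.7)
The plan is to mimic the probabilistic lower-bound argument Grothendieck used for $K_G$ itself, with $m$-dimensional unit vectors now playing the role the $\pm 1$ scalars play there. I would fix a large integer $r$, draw $u_1,\dots,u_r$ i.i.d.\ uniformly from $S^{n-1}$, and take the candidate matrix to be the Gram matrix $M_{ij}=u_i\cdot u_j$. Setting $\vec a_i=\vec b_i=u_i$ on the left-hand side of \eqref{gengroth} realizes a value of at least $\sum_{i,j}(u_i\cdot u_j)^2$, which a direct second-moment computation shows concentrates around $r^2/n$ for large $r$.

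The heart of the argument is an upper bound on the right-hand side. I would write $\sum_{i,j}(u_i\cdot u_j)(\vec a'_i\cdot \vec b'_j)=\langle X_{\vec a'},X_{\vec b'}\rangle_F$ with $X_{\vec a'}:=\sum_i u_i(\vec a'_i)^{T}\in\mathbb{R}^{n\times m}$, apply Cauchy--Schwarz to reduce to bounding $\max_{\vec a'}\|X_{\vec a'}\|_F$, then dualize the Frobenius norm and optimize over each $\vec a'_i$ pointwise to obtain $\max_{\vec a'}\|X_{\vec a'}\|_F=\sup_{\|W\|_F\le 1}\sum_i |W^{T}u_i|$. Dividing by $r$, the law of large numbers together with a uniform-concentration argument over $\{W:\|W\|_F\le 1\}$ makes this close to $\sup_W \mathbb{E}_u\bigl[\sqrt{u^{T}WW^{T}u}\bigr]$. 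Writing $S=WW^{T}$, the task reduces to maximizing $\mathbb{E}_u[\sqrt{u^{T}Su}]$ over PSD $S$ of trace $1$ and rank at most $m$. Rotational invariance of $u$ lets one assume $S$ is diagonal, and the resulting function of its nonzero eigenvalues $(\lambda_1,\dots,\lambda_m)$ on the probability simplex is a symmetric, concave expectation of $\sqrt{\cdot}$, hence Schur-maximized at the uniform vertex $\lambda_1=\cdots=\lambda_m=1/m$, with optimal value $\mathbb{E}[r_m]/\sqrt m$, where $r_m:=|(u_1,\dots,u_m)|$ is the length of the first $m$ coordinates of a uniform vector on $S^{n-1}$.

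Combining the two bounds, the ratio of the two sides of \eqref{gengroth} tends to $m/(n\,\mathbb{E}[r_m]^2)$ as $r\to\infty$, which yields the desired lower bound on $K_G(n\mapsto m)$. The Gamma-function form follows because $r_m^2$ is $\mathrm{Beta}(m/2,(n-m)/2)$-distributed, so $\mathbb{E}[r_m]=\Gamma((m+1)/2)\Gamma(n/2)/(\Gamma(m/2)\Gamma((n+1)/2))$; the asymptotic expansion in \eqref{eq:1} is a direct consequence of the standard estimate $\Gamma((k+1)/2)/\Gamma(k/2)=\sqrt{k/2}\bigl(1+O(1/k)\bigr)$. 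I expect the main technical obstacle to be the uniform-concentration step used to control $\sup_{\|W\|_F\le 1}\sum_i|W^{T}u_i|/r$ over the sphere-sampling noise; a clean way to sidestep it is to pass to the continuous limit, indexing rows and columns of $M$ by $S^{n-1}$ itself with kernel $M(u,v)=u\cdot v$. In that setting the Cauchy--Schwarz step is tight for $\vec a=\vec b$ and the Schur-concavity computation is exact, and a final Monte-Carlo sampling of $u_1,\dots,u_r$ transfers the ratio back to the $r\times r$ matrices of Definition~\ref{def:2}.
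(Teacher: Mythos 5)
Your proposal is essentially the paper's own proof: the same kernel $M(a,b)=a\cdot b$, the same reduction of the right-hand side to $\max_{A}\bigl\|\int da\, a\,A(a)^{T}\bigr\|_F^2$ (you via Cauchy--Schwarz, Frobenius-norm duality and a PSD reparametrization $S=WW^T$; the paper via the tensor-product identity $(a\cdot b)(A(a)\cdot B(b))=(a\otimes A(a))\cdot(b\otimes B(b))$ and a Schmidt/SVD decomposition — equivalent routes to the same object), the same symmetry-plus-Jensen (Schur-concavity) step showing that uniform weights on $m$ coordinates are optimal, and the same Gamma-ratio conclusion (you compute $Y_m=\mathbb{E}[\sqrt{\sum_{i\le m}u_i^2}]$ via the $\mathrm{Beta}(m/2,(n-m)/2)$ law, the paper via a Gaussian integral trick). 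You also correctly diagnose that the i.i.d.-sampling/uniform-concentration route is best avoided by working directly with the continuous kernel on $S_{n-1}$, which is exactly what the paper's Lemma~\ref{lemma:1} sets up.
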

We do not need an upper bound on $K_G(n \mapsto m)$ for our quantum
application, so we don't prove one. Note however that $K_G(n \mapsto m)
\leq K_G(n)$, which does establishes a trivial upper bound. A better
upper bound could be obtained by using the techniques
in~\cite{Krivine:79a}.

\paragraph{Application to nonlocal XOR games.}
As a corollary of Theorem~\ref{vgrothlb}, we show that there are
nonlocal quantum correlations that require entangled states
with local support on a Hilbert space of dimension at least~$d$ (we
allow arbitary shared randomness). This resolves a question of Brunner
et al.~\cite{brunner:210503}, proving that what they term
\emph{dimension witnesses} exist with binary outcomes.

Brunner et al. pointed out that the same result would follow if one
could prove that the Grothendieck constants $K_G(n)$ are strictly
increasing in $n$. This is plausible but we do not know how to
prove it. Our new proof sidesteps this issue. 

\paragraph{Related work.} Definition~\ref{def:2} is but the latest in a long history of generalizations of Grothendieck's
inequality. Previously, Grothendieck's inequality has been
generalized as follows: 
\begin{itemize}
\item Replacing the real scalars, vectors and matrices with complex
ones results in our defining the \emph{complex
  Grothendieck constant}.
\item Restricting to
matrices $M$ with positive entries results in a tighter
inequality~\cite{rietz}.
\item Rather than proving inequalities that hold for all matrices, we
  can prove inequalities that only hold for all matrices $M$ of some
  fixed size, say $r \times
  s$. This refinement has been studied by Fishburn and
  Reeds~\cite{fishburn94:_bell_inequal_const_root_two}, and 
  results in the definition of a constant which they denote $K_G(r,s)$, not to be
  confused with our $K_G(n \mapsto m)$. 
\item Observe that Eq.~\eqref{gineq} has a bipartite structure, in the
  following sense: on the left-hand side, the sum is of inner
  products
  $a_i \cdot \vec b_j$ of a vector from $\vsequence{a}{r}$ with a
  vector from $\vsequence{b}{r}$; there are no inner products $a_i
  \cdot a_j$ or $b_i \cdot b_j$. A similar observation applies to the
  right-hand side. So if we consider a graph with vertices labelled by
  the vectors $a_i$ and $b_j$, and draw an edge between vertex $a_i$ and $b_j$
  whenever $M_{ij}\neq 0$, then the resulting ``interaction graph'' is
  bipartite. Alon et al. have generalized Grothendieck's
  inequality to general graphs that are not necessarily 
  bipartite~\cite{alon06:_quadr}. 
\end{itemize}

There is some earlier work on lower bounding the amount of
entanglement required to reproduce certain correlations. 
Wehner,
Christandl and Doherty show how to obtain lower bounds using information-theoretic arguments~\cite{wehner:_lower_bound_dimen_of_quant}.

The Hidden Matching quantum communication complexity problem (HM(n))~\cite{Kerenidis08} can be formulated as a nonlocal correlation, where a maximally entangled state of dimension $n$ is used to reproduce the correlations perfectly. On the other hand, using the classical bounded error one-way communication complexity lower bound for HM(n),  it  follows that one needs $\omega(\sqrt{n})$ bits of one-way communication to approximately reproduce these correlations classically. This in turn yields a lower bound on the dimension of the entangled state of $\sqrt{n}/\log n $ for any quantum strategy that  approximates these correlations. This follows because any smaller dimensional state can be used to establish a classical one-way protocol that approximates these correlations and uses less than $\omega(\sqrt{n})$ bits of communication, by simply communicating a classical description of an  approximation of  the state that Bob has after Alice did her measurement.




\paragraph{Outline.} The paper is structured as follows. We define
notation in Section~\ref{sec:notation}. In
Section~\ref{sec:an-equiv-defint}, we rework the definition of $\kgnm$
in order to work in the limit $r \to \infty$, which makes things
simpler. Then, in Section~\ref{kglbsec}, we prove our main result,
Theorem~\ref{vgrothlb}. In Section~\ref{sec:xor-games}, we describe
the consequences for quantum nonlocality. Readers wishing to skip the
details of the proof can read Section~\ref{sec:xor-games} immediately
after Section~\ref{sec:an-equiv-defint}.

\section{Notation}\label{sec:notation}
We write $[n]$ for the set~$\{1,\dots, n\}$. The unit sphere in
$\RR^n$ is denoted $S_{n-1}$. We write $\vec{da}$ for the Haar measure
on $S_{n-1}$, normalized such that $\int \vec{da} = 1$. The Dirac
delta function on $S_{n-1}$ is defined by $\delta(a-b) = 0$ if $a \neq
b$ and $\int da \delta(a-b) =1$. The norm $\|\vec a \|$ of a vector
$\vec a$ is always the Euclidean norm. In the Introduction and
Appendix, subscripts label vectors; in the remainder of the paper,
subscripts on a vector denote its componenents. Variables in
lowercase roman type will typically be vectors on the unit sphere;
variables in lowercase greek type will typically be scalars. 


\section{An equivalent defintion of $\kgnm$}
\label{sec:an-equiv-defint}
To establish a lower bound on $\kgnm$ per Eq.~\eqref{gengroth}, we need to exhibit an $r \times
r$ matrix $M$ and then calculate (or at least bound) both sides of
Eq.~\eqref{gengroth}. We will work in the limit $r \to \infty$ and so
we start by giving an alternative definition of $\kgnm$ that
facilitates this.

\begin{lemma}
\label{lemma:1}
The constant $\kgnm$ is given by 
\begin{align}\label{cgrothfrac}
K_G(n \mapsto  m) = \sup_{M:S_{n-1}\times S_{n-1}\to[-1,1]}\left(\frac{1}{D(M)}{\int
    \vec{da}\vec{db} M(\vec a, \vec b){\vec a \cdot \vec b}}{}\right).
\end{align}
where the supremum is over measurable functions
$M:S_{n-1}\times S_{n-1}\to[-1,1]$ and the denominator
\begin{align}
  \label{eq:2}
D(M) = \max_{A,B:S_{n-1}\to S_{m-1}}\int
    \vec{da}\vec{db} M(\vec a, \vec b){\vec A(\vec a) \cdot \vec B(\vec b)},
\end{align}
with the maximum over functions $A,B:S_{n-1}\to S_{m-1}$.
\end{lemma}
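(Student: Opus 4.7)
I would prove the inequalities $\kgnm \geq \sup_M N(M)/D(M)$ and $\kgnm \leq \sup_M N(M)/D(M)$ separately, where I abbreviate $N(M) := \int\vec{da}\vec{db}\,M(\vec a,\vec b)\vec a\cdot\vec b$. Both directions will relate $N(M)$ and $D(M)$ to the two sides of Eq.~\eqref{gengroth}. The key technical ingredient in both is the observation that, because the bilinear expressions in Eq.~\eqref{gengroth} are linear in each vector argument, their maxima over $S_{n-1}$ (resp.\ $S_{m-1}$) coincide with those over the closed unit ball in $\RR^n$ (resp.\ $\RR^m$); this is what makes the discretization below match Eq.~\eqref{gengroth} exactly on the denominator side.

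For the upper bound, partition $S_{n-1}$ into measurable cells $U_1,\dots,U_r$ of Haar measure $1/r$ whose diameters tend to zero as $r\to\infty$. Set $M^{(r)}_{ij}:=r^2\int_{U_i\times U_j}M$, which has entries in $[-1,1]$, and let $M^{(r)}$ be piecewise constant with these values on $U_i\times U_j$. The Lebesgue differentiation theorem gives $M^{(r)}\to M$ in $L^1$, and both $N$ and $D$ are continuous in this topology since the integrands are bounded by $1$. Writing $\vec{\hat a}_i := r\int_{U_i}\vec a\,\vec{da}$, which lies in the closed unit ball of $\RR^n$, a direct computation gives $N(M^{(r)})=\tfrac{1}{r^2}\sum_{ij}M^{(r)}_{ij}\vec{\hat a}_i\cdot\vec{\hat b}_j$, and an analogous computation gives $D(M^{(r)})=\tfrac{1}{r^2}\sup\sum_{ij}M^{(r)}_{ij}\vec{\hat A}_i\cdot\vec{\hat B}_j$ with the supremum over tuples in the closed unit ball of $\RR^m$; this latter equality uses that every vector in the closed unit ball arises as $r\int_{U_i}\vec A$ for some $\vec A:S_{n-1}\to S_{m-1}$ (let $\vec A$ take two antipodal values on $U_i$ with appropriate measures). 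Replacing both closed-ball maxima by sphere maxima (by the observation above) and applying Eq.~\eqref{gengroth} to $(M^{(r)}_{ij})$ then gives $N(M^{(r)})\leq\kgnm\cdot D(M^{(r)})$; sending $r\to\infty$ completes this direction.

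For the lower bound, start from any $r\times r$ matrix $M_d$ with optimal sphere vectors $\vec a_1^*,\dots,\vec a_r^*,\vec b_1^*,\dots,\vec b_r^*$ realising the left-hand side of Eq.~\eqref{gengroth}, take disjoint spherical caps $V_i,W_j$ of common Haar measure $v$ centred at the $\vec a_i^*$ and $\vec b_j^*$ (perturbing slightly if any vectors coincide), and define $\tilde M(\vec a,\vec b):=(M_d)_{ij}/\|M_d\|_\infty$ on $V_i\times W_j$ and $0$ elsewhere, so that $\tilde M\in[-1,1]$. Running the same calculations as above in reverse yields $D(\tilde M)=(v^2/\|M_d\|_\infty)$ times the right-hand maximum of Eq.~\eqref{gengroth} for $M_d$, exactly, while $N(\tilde M)\to(v^2/\|M_d\|_\infty)\sum_{ij}(M_d)_{ij}\vec a_i^*\cdot\vec b_j^*$ as $v\to 0$ because $v^{-1}\int_{V_i}\vec a\,\vec{da}\to\vec a_i^*$. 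Hence $N(\tilde M)/D(\tilde M)$ approaches the ratio of the two sides of Eq.~\eqref{gengroth} for $M_d$, and taking the supremum over $r$ and $M_d$ recovers $\kgnm$.

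The main technical subtlety is the closed-ball-versus-sphere observation noted at the start: without it, the partition and bump-function constructions would introduce a denominator error depending on the cell geometry, and one would only recover equality up to an unwanted constant factor.
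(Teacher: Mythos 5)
Your proposal is correct and follows the same two-directional strategy as the paper's Appendix~\ref{sec:conv-form-k_gnm}: one inequality by discretizing a measurable $M$ into a finite matrix (your $M^{(r)}$ is exactly the paper's $M_{ij}=\int_{R_i\times R_j}M'$ up to normalization, cf.\ Proposition~\ref{lgeqr}), and the other by localizing a finite matrix at its optimal vectors (cf.\ Proposition~\ref{rgeql}). The execution differs in two ways that are worth noting. First, for the localization direction the paper uses Dirac deltas, $M'(\vec a,\vec b)=\sum_{ij}M_{ij}\delta(a-a_i^*)\delta(b-b_j^*)$, which is not literally a measurable function into $[-1,1]$; your shrinking-cap bump functions $\tilde M$ are a rigorous substitute, at the cost of only achieving the numerator in the limit $v\to0$ (which suffices, since one is taking a supremum). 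Second, your closed-ball-versus-sphere observation is doing real work: it lets you compute $D(M^{(r)})$ and $D(\tilde M)$ \emph{exactly} as the discrete $m$-dimensional maximum, whereas the paper only establishes the one inequality it needs in each proposition (piecewise-constant embeddings form a subset of all embeddings, and embeddings constant on the fibers of the labelling form a subset of all tuples). Your handling of coincident optimal vectors by perturbation is an acceptable alternative to the paper's row-merging argument. The only points I would ask you to spell out in a full write-up are the $L^1$-convergence $M^{(r)}\to M$ (conditional expectation onto a refining partition, contraction in $L^1$ plus density of continuous functions) and the implicit fact that the denominators are positive for nonzero $M$, both of which the paper also leaves implicit.
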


We informally describe why Lemma~\ref{lemma:1} is true. Fix an $r \times r$
matrix $M$, and let $\vsequence{a^*}{r},\vsequence{b^*}{r}$ be the
$n$-dimensional unit vectors that maximize \begin{align}
  \label{eq:4}
  \max_{\substack{\vsequence{a}{r}\\\vsequence{b}{r}}}\,
\sum_{i,j} M_{ij} {\vec a_i \cdot \vec b_j}, 
\end{align}
the left-hand side of Eq.~\eqref{gengroth}. Here the vectors $a^*_i$
and $b^*_j$ are labelled by indices $i$ and $j$, but these are just
dummy indices and we could have written the sum with whatever indices
we liked. The idea behind Lemma~\ref{lemma:1} is to \emph{use the
  vectors themselves as labels}, which works as long as the vectors are
all distinct. Thus we replace the matrix $M_{ij}$ in Eq.~\ref{eq:4} with
an infinite matrix $M(\vec a, \vec b)$ with rows labelled
by unit vectors $a$ and columns labelled by unit vectors $b$. The sum
over $i,j$ is replaced by integrals over $a$ and $b$. The matrix
element $M(a^*_i,b^*_j)$ is $M_{ij}$; all other entries of the matrix are
zero. 

It remains to understand what happens if two or more vectors are the
same, say $a_1^* = a_2^*$. In this case, we can replace the $r \times
r$ matrix $M$ with an $(r-1)\times r$ matrix $M'$ obtained from $M$ by
replacing the first two rows with their sum. We claim that the bound
on $\kgnm$ established by $M'$ is at least as good as the bound
established by $M$. To see this, observe that replacing $M$ with $M'$
doesn't change the value of the left-hand side of
Eq.~\ref{gengroth}. Replacing $M$ with $M'$ on the right-hand side of
Eq.~\ref{gengroth} is equivalent to performing the maximization
\begin{align}
  \label{eq:6}
  \max_{\substack{\vsequence{a'}{r}\\\vsequence{b'}{r}}}\, \sum_{i,j} M_{ij} {\vec
a'_i \cdot \vec b'_j},  
\end{align}
with the \emph{additional} constraint that $a'_1 = a'_2$, which cannot
increase the maximum. Thus the bound on $\kgnm$ obtained using $M'$ is
at least as good as that obtained using $M$. Thus it is okay to assume
that all the vectors are distinct. 

We give a formal proof of Lemma~\ref{lemma:1} in
Appendix~\ref{sec:conv-form-k_gnm}.

\section{Lower bound on $\kgnm$}\label{kglbsec}We prove Theorem~\ref{vgrothlb} by considering a specific example due
to Grothendieck himself~\cite{grothendieck53:_resum}: for $a,b\in
S_{n-1}$, take $M(\vec a, \vec b) = {\vec a\cdot \vec b}$. 

We start by calculating the denominator $D(M)$. To do this, we need to
work out which embeddings $A,B:S_{n-1} \to S_{m-1}$ achieves the maximum
in Eq.~\ref{eq:2}. It turns out that this is achieved when $A$ and $B$ are equal. Informally, we should try to preserve as much of
the structure of $S_{n-1}$ as possible, and it is natural to
conjecture that the best embedding is a projection onto an
$m$-dimensional subspace. This is indeed the case. We prove this in
the following Lemma.


\begin{lemma}\label{simpints}
For the function $M(a,b) = {a \cdot b}$, the optimal embedding
$A:S_{n-1}\to S_{m-1}$ is a projection. In particular, 
the denominator $D(M)$ is given by 
\begin{align}\label{eq:3}
D = \frac{1}{m}\biggl(\int da\, \Bigl(\sum_{i=1}^m
a_i^2\Bigr)^{1/2}\biggr)^2,
\end{align}
where $a_1, \ldots, a_n$ are the components of $a$.
\end{lemma}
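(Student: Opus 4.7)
My plan is a calculus-of-variations argument that reduces the problem to a linear ansatz, followed by an SVD-based symmetry reduction and a single Cauchy--Schwarz inequality.

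First I would reduce to the case $B=A$. For any $A,B$, the identity $\int\int da\,db\,(a\cdot b)\,A(a)\cdot B(b) = \sum_{i=1}^{n} V_i(A)\cdot V_i(B)$ with $V_i(A):=\int da\,a_iA(a)\in\RR^m$, combined with Cauchy--Schwarz on the two vector sequences, gives $\int\int da\,db\,(a\cdot b)\,A(a)\cdot B(b) \le\sqrt{K(A,A)\,K(B,B)}$, where $K(A):=\int\int da\,db\,(a\cdot b)\,A(a)\cdot A(b)$.  Since equality is attained with $B=A$, we get $D(M)=\sup_A K(A)$.  A first variation of $K$ with a pointwise Lagrange multiplier enforcing $\|A(a)\|=1$ then forces $(TA)(a):=\int db\,(a\cdot b)A(b)$ to be parallel to $A(a)$ at every $a$.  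But $(TA)(a)=V^T a$ is linear in $a$, so the optimizer must take the form $A(a)=La/\|La\|$ for some $m\times n$ matrix $L$; existence of a maximizer follows from weak-$*$ compactness of the unit ball of $L^\infty(S_{n-1};\RR^m)$.

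Next, the joint $O(n)\times O(m)$ invariance of $K$ acts on this ansatz by $L\mapsto RLQ^{-1}$, so after an SVD I may assume $L=[\tilde D\ 0]$ with $\tilde D=\mathrm{diag}(\sqrt{\lambda_1},\ldots,\sqrt{\lambda_m})$ and $\lambda_i\ge 0$.  Parameterizing a uniform point on $S_{n-1}$ as $a=(\alpha u,\sqrt{1-\alpha^2}\,w)$ with $\alpha=\sqrt{\sum_{i=1}^{m} a_i^2}$, $u\in S_{m-1}$ and $w\in S_{n-m-1}$ all independent, the map $A(a)=\tilde Du/\|\tilde Du\|$ depends only on $u$, and the $w\cdot w'$ piece of $a\cdot b$ integrates to zero, giving
\begin{equation*}
K(A)=\mathbb{E}[\alpha]^{2}\cdot J(\lambda),\qquad J(\lambda):=\int du\,dv\,\frac{(u\cdot v)\sum_i\lambda_i u_i v_i}{\|\tilde Du\|\,\|\tilde Dv\|}.
\end{equation*}
Expanding both sums and using the parity $u_i\mapsto -u_i$ to kill the $i\ne j$ cross terms collapses this to $J(\lambda)=\sum_i\lambda_i I_i(\lambda)^2$ with $I_i(\lambda):=\int du\,u_i^2/\|\tilde Du\|$.

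The clinching inequality is Cauchy--Schwarz on $I_i$: writing $I_i=\int du\,u_i\cdot(u_i/\|\tilde Du\|)$ gives $I_i^2\le\bigl(\int du\,u_i^2\bigr)\bigl(\int du\,u_i^2/\|\tilde Du\|^2\bigr)=(1/m)\int du\,u_i^2/\|\tilde Du\|^2$.  Weighting by $\lambda_i$, summing, and using $\sum_i\lambda_i u_i^2=\|\tilde Du\|^2$ gives $J(\lambda)\le 1/m$, with equality iff $\|\tilde Du\|$ is constant on $S_{m-1}$, i.e.\ all $\lambda_i$ are equal.  Hence the optimum is the normalized projection $A(a)=(a_1,\ldots,a_m)/\sqrt{\sum_{i=1}^{m}a_i^2}$, with value $\mathbb{E}[\alpha]^2/m=(1/m)\bigl(\int da\,\sqrt{\sum_{i=1}^{m}a_i^2}\bigr)^2$, which is (\ref{eq:3}).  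The main obstacle I expect is making the Euler--Lagrange reduction fully rigorous: verifying existence of the maximizer, confirming that the stationary condition truly characterizes the global maximum rather than just a saddle, and handling the measure-zero exceptional set where $La=0$.  Once the ansatz $A(a)=La/\|La\|$ is in hand, the remainder is the symmetry reduction and a one-line Cauchy--Schwarz.
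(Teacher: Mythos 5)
Your proposal is correct and arrives at \eqref{eq:3}, but reaches the ``linear ansatz'' $A(a)=La/\|La\|$ by a genuinely different and somewhat heavier route, and then finishes with a different inequality. The paper avoids the Euler--Lagrange/compactness machinery entirely: it writes $(a\cdot b)(A(a)\cdot B(b)) = (a\otimes A(a))\cdot(b\otimes B(b))$, so that $D=\max_A\|\int da\,a\otimes A(a)\|^2$, and then uses the duality $\|w\|=\max_{\|v\|=1}v\cdot w$ to exchange the order of maximization. For a \emph{fixed} unit $v$ with Schmidt decomposition $v=\sum_i\sqrt{\gamma_i}\,x_i\otimes y_i$, the inner maximization over $A$ is solved pointwise by $A(a)\propto\sum_i\sqrt{\gamma_i}(a\cdot x_i)y_i$, which hands you the weighted-projection form \emph{for free}, with no existence theorem and no worry about whether stationary points are global maxima. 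This renders the issues you flag (weak-$*$ compactness, saddle points, the $\{La=0\}$ set) moot, which is the main thing the paper's route buys you. The final step is also different: the paper averages $\chi(\gamma)=\int da\,(\sum_i\gamma_i a_i^2)^{1/2}$ over permutations of the $\gamma_i$ and applies Jensen's inequality to the concave square root, whereas you reduce the symmetrized integral to $J(\lambda)=\sum_i\lambda_i I_i(\lambda)^2$ via a parity argument and then hit it with Cauchy--Schwarz. Your Cauchy--Schwarz finish is clean and arguably more direct, and it even pins down the equality case explicitly; both are perfectly valid. But if you want to make your argument airtight, I would replace the variational derivation of the linear ansatz with the duality $\sup_A\|w_A\|=\sup_{\|v\|=1}\sup_A\,v\cdot w_A$ exactly as the paper does --- it removes all of the analytic overhead while keeping your final Cauchy--Schwarz step intact.
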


\begin{proof}
We prove this result in
two steps. First, we show that the maximum is achieved by a
\emph{weighted} projection. Second, we show that the best
projection is one with uniform weights. 

We need to calculate
\begin{align}
D(M) = \max_{A,B:S_{n-1}\to S_{m-1}}\int
    \vec{da}\vec{db} M(\vec a, \vec b){\vec A(\vec a) \cdot \vec B(\vec b)},
\end{align}
with the maximum over functions $A,B:S_{n-1}\to S_{m-1}$.
For $M(a,b) = a \cdot b$, we can write
\begin{align}
  \label{eq:7}
(a\cdot
b)\big(A(a)\cdot B(b)\big) = \big(a\otimes A(a)\big)\cdot
\big(b\otimes B(b)\big),
\end{align}
(this trick is motivated by a similar one used by Krivine in proving his upper bound
on $K_G$~\cite{Krivine:79a}),
 which allows us to write $D(M)$ as a maximization over the inner product of two vectors,
\begin{align}
D= &\max_{A,B:S_{n-1}\to S_{m-1}} \left( \int da\, a \otimes A(a) \right)\cdot \left( \int db\, b \otimes B(b) \right)\\
=& \max_{A:S_{n-1}\to S_{m-1}} \left\|  \int da\, a \otimes A(a) \right\|^2,\label{absint}
\end{align}
where the second equality follows from the fact that the inner product is maximized when vectors are parallel.
Let $\int da\, a \otimes A(a) = \chi v$, where $v$
is an $(n\!+\!m)$-dimensional unit vector and $\chi \geq 0$ is what
we want to maximize. Applying the singular value
decomposition---known in quantum information theory as the Schmidt
decomposition (see for example~\cite{Nielsen:00a}---we can write
\begin{align}
  v = \sum_{i=1}^{m} \sqrt{\gamma_i} \,x_i \otimes y_i,
\end{align}
where, for each $i\in[m]$, $\gamma_i \geq 0$, $\sum_i \gamma_i = 1$,
and $\{x_1,\dots,x_m\}$ and $\{y_1,\dots,y_m\}$ are orthonormal sets
in $\RR^n$ and $\RR^m$ respectively. 
Therefore, in order to maximize 
\begin{align}\label{eq:5}
\chi  = v \cdot \int da\, a \otimes A(a) = \int da \sum_i
\sqrt{\gamma_i}(a \cdot x_i) (A(a) \cdot y_i) = \int da A(a) \cdot \Bigl(\sum_i \sqrt{\gamma_i} (a \cdot x_i) y_i \Bigr),
\end{align}
we should choose $A(a)$ to be 
\begin{align}
\frac{  \sum_i \sqrt{\gamma_i} (a \cdot x_i) y_i}{\|\sum_i
  \sqrt{\gamma_i} (a \cdot x_i) y_i\|} = \frac{  \sum_i
  \sqrt{\gamma_i}\, (a\cdot x_i) y_i}{({\sum_i \gamma_i (a \cdot x_i)^2})^{1/2}},
\end{align}
 a weighted projection onto some $m$-dimensional subspace, the particular choice of which does not matter. 
Substituting this into Eq.~(\ref{eq:5}) and then Eq.~\eqref{absint} and choosing a basis
for $\RR^n$ by extending $\vsequence{x}{m}$ so that $a_i = a \cdot
x_i$ establishes that 
\begin{align}\label{eq:3}
D = \max_{A,B:S_{n-1}\to S_{m-1}}\int dadb(a\cdot
b)\big(A(a)\cdot B(b)\big) = \bigl(\chi(\gamma_1,\ldots,
\gamma_m)\bigr)^2,
\end{align}
 where
\begin{align}
  \label{eq:8}
\chi(\gamma_1,\ldots, \gamma_m) =   \int da\, \Bigl(\sum_{i=1}^m
\gamma_i a_i^2\Bigr)^{1/2}.
\end{align}

It remains to show that weights $\gamma_i$ can be
taken to be equal. To prove this, suppose that $\chi$ is maximized by $(\gamma_1^*,
\gamma_2^*, \ldots, \gamma_m^*)$.  Then, by symmetry, the
maximum is also achieved by $(\gamma_2^*, \gamma_1^*, \ldots,
\gamma_m^*)$, and indeed, by any other permutation $\sigma$ of the
$\gamma_i^*$. Hence
\begin{align}
  \label{eq:9}
\chi(\gamma_1^*,\ldots,\gamma_m^*) 
&= \frac{1}{m!} 
\sum_{\sigma} \chi(\gamma_{\sigma(1)}^*, \ldots, \gamma_{\sigma(m)}^*)\\
&= \frac{1}{m!} \sum_{\sigma} \int da \Bigl(\sum_{i=1}^m
\gamma_{\sigma(i)} a_i^2\Bigr)^{1/2}\\
&= \int da \frac{1}{m!} \sum_{\sigma} \Bigl(\sum_{i=1}^m
\gamma_{\sigma(i)}^* a_i^2\Bigr)^{1/2}\\
&\leq \int da \Bigl(\frac{1}{m!}  \sum_{\sigma} \sum_{i=1}^m
\gamma_{\sigma(i)}^* a_i^2\Bigr)^{1/2}
\end{align}
by Jensen's inequality and the concavity of $(\cdot)^{1/2}$. But 
the coefficient of $a_i^2$ in this expression is just 
\begin{align}
  \label{eq:10}
\frac{1}{m!} \sum_\sigma \gamma_{\sigma(i)}^* = \frac{1}m \sum_i
\gamma_i^* = \frac{1}m \times 1 = \frac{1}m. 
\end{align}
Thus the maximum is achieved by uniform weights. 
 \end{proof}

With Lemma~\ref{simpints} in hand, the proof of Theorem~\ref{vgrothlb}
is straightforward.

\begin{proof}[ of Theorem \ref{vgrothlb}]
Take $M(a,b) = a \cdot b$ in Lemma~\ref{cgrothfrac}. It follows from
Lemma~\ref{simpints}, that 
\begin{align}
  \label{eq:11}
  K_G(n\mapsto m) \geq \frac{m}{n} \biggl(\frac{Y_n}{Y_m}\biggr)^2,
\end{align}
where 
\begin{align*}
Y_k : =\int_{a\in S_{n-1}}da\Bigl(\sum_{i=1}^ka_i^2\Bigr)^{1/2},
\end{align*}
and we evaluated the numerator in Eq.~\eqref{eq:3} by observing that
it is the same as the denominator when $m = n$, and so we already
calculated it as a special case of Lemma~\ref{simpints}. 
We can evaluate $Y_k$ using a trick similar to that used to calculate the surface area of the $n$-sphere. Define
\begin{align*}
C_k := \int_{a\in\RR^n}da\left(\sum_{i=1}^ka_i^2\right)^{1/2}e^{-\|a\|_2^2}.
\end{align*}
Introducing spherical coordinates, and writing $r = \|a\|_2$, we have 
\begin{align}\label{Knminusone2}
C_k = Y_k\int_0^{\infty}drr^{n-1} (r^2)^{1/2}e^{-r^2} = \frac{1}{2} Y_k\Gamma\big(\tfrac{n+1}{2}\big),
\end{align}
where $\Gamma$ is the well-known gamma function.

On the other hand, we have
\begin{align}\label{kmtwo}
C_k= \int_{-\infty}^{\infty}da_1\cdots da_{k}\left(\sum_{i=1}^ka_i^2\right)^{1/2}e^{-(a_1^2 + \cdots + a_k^2)}\int_{-\infty}^{\infty}da_{k+1}\cdots da_ne^{-(a_{k+1}^2+ \cdots + a_n)^2}.
\end{align}
We can interpret $\left(\sum_{i=1}^ka_i^2\right)^{1/2}$ as the norm of
a point in $k$-dimensional space, and write the first integral (over $k$ variables) as
\begin{align*}
\Omega_k\int_{0}^{\infty}dr' r'^ke^{-r'^2} = \frac{2\pi^{k/2}}{\Gamma(\frac{k}{2})}\cdot\frac{\Gamma\big(\tfrac{k+1}{2}\big)}{2},
\end{align*}
where $\Omega_k$ is the surface area of a unit sphere in $k$
dimensions. The second integral of Eq.~(\ref{kmtwo}) is simply
$\big(\sqrt{\pi}\big)^{n-k}$. Comparing these two ways to evaluate
$C_k$, we conclude that 

\begin{align*}
Y_k = 2\pi^{n/2}\frac{\Gamma(\frac{k+1}{2})}{\Gamma(\frac{k}{2})\Gamma(\frac{n+1}{2})}
\end{align*}
and
\begin{align}
K_G(n\mapsto m) &\geq
\frac{m}{n}\left(\frac{Y_n}{Y_{m}}\right)^2\\
&=
\frac{m}{n}\left(\frac{\Gamma(\frac{m}{2})}{\Gamma(\frac{m+1}{2})}\frac{\Gamma(\frac{n+1}{2})}{\Gamma(\frac{n}{2})}\right)^2.
\end{align}

For all integers $1\leq m<n$, this bound is nontrivial, i.e., is strictly greater
than $1$. This is because the function 
  \begin{align}
    \label{eq:13}
f(n) = \frac{1}{\sqrt{n}}\frac{\Gamma(\frac{n+1}{2})}{\Gamma(\frac{n}{2})}    
  \end{align}
is strictly increasing for $n = 1, 2, \ldots $ (see
Appendix~\ref{app:b} for a proof). Asymptotically, we have
\begin{align}
  \label{eq:12}
\kgnm
\geq 1 + \frac{1}{2m} - \frac{1}{2n} - O(\frac{1}{m^2}), 
\end{align}
where the approximation follows from the asymptotic series (see answer to Exercise 9.60 in~\cite{gkp:concrete})
\begin{align}\label{gammaasym}
\frac{\Gamma(k+ \frac{1}{2})}{\Gamma(k)} = \sqrt{k}(1 - \frac{1}{8k} + \frac{1}{128 k^2} + \dots).
\end{align}
\end{proof}

\section{Quantum nonlocality}
\label{sec:xor-games}
Here we describe the application to quantum nonlocality. Suppose that
two parties, Alice and Bob, each have a $d$-dimensional quantum
system, described by Hilbert spaces $\cH_A \cong \CC^d$ and $\cH_B
\cong \CC^d$, respectively. Alice and Bob each make a two-outcome 
measurement on their own system, resulting in outcomes
$\alpha,\beta\in\pmset{}$, respectively. Suppose the set of Alice's 
possible measurements is $M_A$, and the set of Bob's possible
measurements is $M_B$. An observable is a Hermitian operator with eigenvalues
in $\{\pm 1\}$. Alice's $a$th possible measurement is specified by an
observable $A_a$ on $\cH_A$; Bob's $b$th measurement by an observable $B_b$ on
$\cH_B$ (and all observables specify valid measurements). If the joint
system of Alice and Bob is in pure state 
$\ket \psi \in \CC^d \otimes \CC^d$, then the \emph{joint correlation}---the expectation of the product of Alice and Bob's
outcomes, given that Alice performs measurement $a$ and Bob
measurement $b$---is 
\begin{align}
  \label{eq:16}
  E[ \alpha \beta |a b] = \bra \psi A_a \otimes B_b
  \ket \psi.
\end{align}
In the computer science literature, such correlations are studied
in the context of \emph{XOR nonlocal games}~\cite{Cleve:04a}.

We say that a set of joint correlations, $\{E[ \alpha \beta |a
b]: a\in [a_{\max}], b \in [b_{\max}]\}$,
is \emph{pure-$d$-quantum-realizable} if there is a state $\ket \psi \in \CC^d
\otimes \CC^d$ and for all $a\in [a_{\max}]$, there are
observables $A_a$ on $\CC^d$ and for all $b \in [b_{max}]$, there are
observables $B_b$ on
$\CC^d$ such that $
E[\alpha \beta |a b] = \bra \psi A_a \otimes B_b
\ket \psi$. A set of joint correlations is \emph{$d$-quantum-realizable}
if it is a probabilistic mixture of pure-$d$-quantum-realizable
correlations (this definition accounts for allowing Alice and Bob to share an arbitrary large amounts of shared
randomness, use POVMs, and share a mixed state).
A set of joint correlations is
\emph{finitely quantum-realizable} if there is some $d$ such that the
correlations are $d$-quantum-realizable. 

We prove the following theorem.

\begin{theorem}\label{entxor}
For any $d$, there are correlations that are finitely quantum-realizable, but which
are not $d$-quantum-realizable.
\end{theorem}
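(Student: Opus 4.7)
The plan is to lift Theorem \ref{vgrothlb} into a quantum-dimension lower bound using Tsirelson's correspondence between XOR correlations and inner products of real unit vectors. First I would fix a function $m(d)$ with the property that every pure-$d$-quantum-realizable correlation $E[\alpha\beta|ab] = \bra\psi A_a\otimes B_b\ket\psi$ can be written as $u_a\cdot v_b$ for real unit vectors $u_a, v_b \in \RR^{m(d)}$, and, conversely, every inner-product correlation with vectors in $\RR^N$ arises from a pure quantum strategy on $\CC^{d'}\otimes\CC^{d'}$ for some $d' = d'(N)$. Tsirelson's original argument gives, for instance, $m(d) \le 2d^2$ and $d'(N) \le 2^{\lceil N/2\rceil}$; only the fact that both are finite for each input will matter.

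Next I would observe that shared randomness, POVMs, and mixing cannot improve an XOR \emph{bias}. For any matrix $M = (M_{ij})$, the functional $\{E\} \mapsto \sum_{ij} M_{ij} E[\alpha\beta|ij]$ is linear in the correlation, so by the definition of $d$-quantum-realizability as the convex hull of pure-$d$-quantum-realizable correlations,
\begin{align*}
\sup_{\{E\}\ d\text{-q.r.}} \sum_{ij} M_{ij}\,E[\alpha\beta|ij]
\ =\ \sup_{\{E\}\ \text{pure-}d\text{-q.r.}} \sum_{ij} M_{ij}\,E[\alpha\beta|ij]
\ \le\ \max_{u_i, v_j\,\in\,S_{m(d)-1}} \sum_{ij} M_{ij}\, u_i\cdot v_j,
\end{align*}
where the final inequality is Tsirelson's forward direction.

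Now fix $d$, set $m = m(d)$, and choose any $n > m$ (e.g.\ $n = m+1$). Theorem \ref{vgrothlb} gives $K_G(n \mapsto m) > 1$ strictly, so Definition \ref{def:2} supplies an $r \times r$ matrix $M$ and unit vectors $a_1,\dots,a_r, b_1,\dots,b_r \in S_{n-1}$ with
\begin{align*}
\sum_{ij} M_{ij}\, a_i\cdot b_j \ > \ \max_{a'_i, b'_j\,\in\, S_{m-1}} \sum_{ij} M_{ij}\, a'_i\cdot b'_j.
\end{align*}
The correlation $E[\alpha\beta|ij] := a_i \cdot b_j$ is pure-$d'(n)$-quantum-realizable by the converse Tsirelson direction, hence finitely quantum-realizable, while the inequality chain above shows it cannot be $d$-quantum-realizable, since otherwise $\sum_{ij} M_{ij}\, a_i\cdot b_j$ would be bounded by the middle quantity of that chain, contradicting strict separation.

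The main obstacle is the dimension bookkeeping in the Tsirelson step: one must confirm that allowing POVMs, mixed states, and unbounded shared randomness in the definition of $d$-quantum-realizability still confines the bias-maximizing extremal pure strategies to vectors of real dimension $m(d) = \poly(d)$. Since the paper already fixes $\cH_A \cong \cH_B \cong \CC^d$ for the pure case, and the convex-hull reduction above restricts the maximization to extreme points, the Tsirelson vectorization applies in fixed dimension $m(d)$, and Theorem \ref{vgrothlb} then produces the required separation for every $d$.
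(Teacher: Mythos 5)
Your proof is correct and rests on the same two pillars as the paper's: Tsirelson's vectorization of pure $d$-dimensional XOR strategies into unit vectors in $\RR^{2d^2}$ (Lemma~\ref{acin}), plus its converse (Lemma~\ref{tsirelson}), combined with the strict lower bound $K_G(n\mapsto m)>1$ from Theorem~\ref{vgrothlb}, together with the observation that the Bell bias is linear and hence maximized at extreme points of the convex set of $d$-quantum-realizable correlations.

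The one genuinely different choice you make is in how you arrive at a \emph{finite} set of questions. The paper first proves the separation for correlations parametrized by the entire continuum $S_{n-1}$, using the integral Bell functional $B(E)=\int da\,db\,(a\cdot b)\,E[\alpha\beta|ab]$ applied to $M(a,b)=a\cdot b$, and then spends a subsection discretizing the sphere with an $\eps$-net to recover a finite-question Bell inequality (which is actually needed, since the paper's definitions of pure-$d$-quantum-realizable correlations assume finite index sets $[a_{\max}],[b_{\max}]$). You instead observe that $K_G(n\mapsto m)>1$ already implies, directly from Definition~\ref{def:2} being the smallest constant for which Eq.~\eqref{gengroth} holds over all finite $r\times r$ matrices, the existence of a finite $M$ and finitely many vectors $a_i,b_j\in S_{n-1}$ witnessing the strict separation. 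This bypasses both the continuous Bell functional and the $\eps$-net discretization and gives a more self-contained finish; the tradeoff is that the paper's continuous formulation is the one in which the lower bound of Theorem~\ref{vgrothlb} is actually computed, so your extraction of a finite $M$ silently relies on Lemma~\ref{lemma:1}/Proposition~\ref{lgeqr} to transfer that bound back to finite matrices. It would be worth making that step explicit: the bound of Theorem~\ref{vgrothlb} is established in the integral formulation of Lemma~\ref{lemma:1}, and Proposition~\ref{lgeqr} is what guarantees a finite matrix with ratio exceeding $1$.
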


We now describe the correlations that we use to prove Theorem~\ref{entxor}. Fix some integer
$n$. Alice and Bob's possible measurements are parametrized by unit
vectors in $\RR^n$, $a$ and $b$, respectively. (Note that each party
here has an infinite number of possible measurements; we'll reprove
the theorem with finite sets of measurements in the next subsection.) 
The joint correlations are given by 
\begin{align}
  \label{eq:18}
  E[ \alpha \beta |a b]  = a\cdot b,
\end{align}
where $a \cdot b$ is just the Euclidian inner product of $a$ and
$b$. For all $n$, these correlations are finitely quantum-realizable,
as the following result shows.

\begin{lemma}[Tsirelson~\cite{Tsirelson:85b}]\label{tsirelson}
Let $\ket \psi$ be a
maximally entangled state on $\CC^d \otimes \CC^d$ where $d =
2^{\lfloor n/2\rfloor}$. Then there are two mappings from unit vectors
in $\RR^n$ to observables on $\CC^d$, one taking $a$ to $A_a$, the
other taking $b$ to $B_b$, such that \begin{align*}
\psibra A_a\otimes B_b\psiket = a\cdot b,
\end{align*}
for all unit vectors $a, b$.
\end{lemma}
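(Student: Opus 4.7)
The plan is to invoke the representation theory of the Clifford algebra to exhibit explicit anticommuting observables, and then exploit the standard ``transpose trick'' for the maximally entangled state to evaluate the correlation.

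First I would construct, on $\CC^d$ with $d = 2^{\lfloor n/2 \rfloor}$, a family of $n$ Hermitian operators $\Gamma_1,\dots,\Gamma_n$ satisfying the Clifford anticommutation relations
\begin{align}
\Gamma_i \Gamma_j + \Gamma_j \Gamma_i = 2\delta_{ij}\, I.
\end{align}
The standard Jordan--Wigner / Brauer--Weyl construction, using tensor products of Pauli matrices, produces such a family in exactly this dimension (one needs $\lfloor n/2 \rfloor$ qubits to accommodate $n$ generators, with the last generator handled separately when $n$ is odd). Each $\Gamma_i$ is an observable, i.e. Hermitian with $\Gamma_i^2 = I$.

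Next, given a unit vector $a \in \RR^n$, I would define
\begin{align}
A_a \;=\; \sum_{i=1}^n a_i\, \Gamma_i.
\end{align}
A short computation, using anticommutativity to cancel cross terms and $\Gamma_i^2 = I$ to collapse the diagonal, gives $A_a^2 = \|a\|^2 I = I$, so each $A_a$ is a genuine $\pm 1$-valued observable. For Bob I would set $B_b = \sum_j b_j \Gamma_j^{T}$ (transpose taken in the Schmidt basis of $\ket\psi$), which is again a valid observable since the $\Gamma_j^T$ satisfy the same Clifford relations.

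The final step uses the identity $\bra\psi X \otimes Y \ket\psi = \frac{1}{d}\tr(X Y^T)$ valid for the maximally entangled state $\ket\psi = \frac{1}{\sqrt d}\sum_k \ket{kk}$. Applying this gives
\begin{align}
\bra\psi A_a \otimes B_b \ket\psi \;=\; \frac{1}{d}\tr\Bigl(\sum_{i,j} a_i b_j\, \Gamma_i \Gamma_j\Bigr).
\end{align}
The off-diagonal terms vanish because $\tr(\Gamma_i \Gamma_j) = -\tr(\Gamma_j \Gamma_i) = -\tr(\Gamma_i \Gamma_j)$ for $i \neq j$, and each diagonal term contributes $\tr(I) = d$, yielding exactly $a \cdot b$.

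The main obstacle, and really the only nontrivial content, is the Clifford representation: one has to verify that $n$ anticommuting order-two Hermitian operators fit on precisely $2^{\lfloor n/2 \rfloor}$ dimensions and no fewer. I would either cite the classification of Clifford algebras or give the inductive Pauli construction explicitly for even $n$ and then extend by one generator (using the product $\Gamma_1 \cdots \Gamma_{n-1}$, suitably normalized) for odd $n$. Everything else is linear algebra.
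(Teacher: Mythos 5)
Your proof is correct and is essentially the standard Clifford-algebra argument due to Tsirelson, which the paper cites (Lemma~\ref{tsirelson}) without reproducing a proof. The anticommuting family $\Gamma_1,\dots,\Gamma_n$ on $\CC^{2^{\lfloor n/2\rfloor}}$, the observables $A_a = \sum_i a_i\Gamma_i$ and $B_b = \sum_j b_j\Gamma_j^T$, and the identity $\bra\psi X\otimes Y\ket\psi = \tfrac{1}{d}\tr(XY^T)$ for the maximally entangled state are precisely the ingredients of Tsirelson's original construction, and your verification that the cross terms vanish and the diagonal gives $a\cdot b$ is complete.
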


To show they are not $d$-quantum-realizable, we will use the following
characterization.

\begin{lemma}[\cite{Tsirelson:85b,toner:groth1}]\label{acin}
Suppose Alice and Bob measure observables $A_a$ and $B_b$ on a pure
quantum state $\psiket\in\CC^d\otimes\CC^d$. Then we can associate a
real unit vector $A(a)\in\RR^{2d^2}$ with $A_a$ (independent of
$B_b$), and a real unit vector $B(b)\in\RR^{2d^2}$ with $B_b$
(independent of $A_a$) such that 
\begin{align}
  \label{eq:19}
E[\alpha\beta|ab] = \bra \psi A_a \otimes B_b \ket \psi = A(a) \cdot B(b).
\end{align}
\end{lemma}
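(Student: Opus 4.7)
The plan is to use the standard Tsirelson/vectorization trick that identifies $d\times d$ complex matrices with real vectors of dimension $2d^2$ via the inner product $\langle X,Y\rangle := \operatorname{Re}\operatorname{tr}(X^\dagger Y)$, and to exploit the fact that $A_a^2 = B_b^2 = I$ to normalize.

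First I would represent the pure state via the operator-state isomorphism. Let $\ket{\phi^+} = \sum_{i=1}^d \ket{i}\ket{i}$ be the unnormalized maximally entangled state, and write $\ket\psi = (\one \otimes M)\ket{\phi^+}$ for some $d\times d$ complex matrix $M$ with $\operatorname{tr}(M^\dagger M) = 1$ (this is just repackaging the Schmidt decomposition). Using the standard identity $\bra{\phi^+}(X\otimes Y)\ket{\phi^+} = \operatorname{tr}(X^T Y)$, a short calculation gives
\begin{equation*}
\bra\psi A_a \otimes B_b \ket\psi \;=\; \operatorname{tr}\!\bigl(A_a^T\, M^\dagger B_b M\bigr) \;=\; \operatorname{tr}\!\bigl(\overline{A_a}\, M^\dagger B_b M\bigr),
\end{equation*}
where I used Hermiticity of $A_a$ in the second equality. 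Since $A_a\otimes B_b$ is Hermitian, this trace is real.

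Next I would define the vectors. View the space of $d\times d$ complex matrices as a real inner product space of real dimension $2d^2$ under $\langle X,Y\rangle = \operatorname{Re}\operatorname{tr}(X^\dagger Y)$, and set
\begin{equation*}
A(a) := M\,\overline{A_a}, \qquad B(b) := B_b\, M.
\end{equation*}
Note $A(a)$ depends only on $A_a$ (and the fixed state) and $B(b)$ depends only on $B_b$, as required. A direct computation shows
\begin{equation*}
\langle A(a), B(b)\rangle = \operatorname{Re}\operatorname{tr}\!\bigl(A_a^T M^\dagger B_b M\bigr) = \bra\psi A_a \otimes B_b \ket\psi,
\end{equation*}
which is the desired equality.

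Finally I would verify the unit-norm condition. We have
\begin{equation*}
\|A(a)\|^2 = \operatorname{Re}\operatorname{tr}\!\bigl(\overline{A_a}^{\,\dagger} M^\dagger M \overline{A_a}\bigr) = \operatorname{tr}\!\bigl(M^\dagger M\, \overline{A_a}^{\,2}\bigr) = \operatorname{tr}(M^\dagger M) = 1,
\end{equation*}
using $A_a^2 = I$ (so $\overline{A_a}^{\,2} = I$) and the normalization of $M$. The computation for $\|B(b)\|^2$ is identical, using $B_b^2 = I$. There is nothing delicate: once one sees the isomorphism and chooses the right placement of $M$, the computation falls out. The only mildly tricky point, which I expect to be the main thing to get right in the write-up, is keeping track of the transpose/conjugate and placing $M$ on the correct side so that both $A(a)$ and $B(b)$ become unit vectors simultaneously; this is exactly what forces the dimension to be $2d^2$ rather than $d^2$, since we must work over the reals to get an inner-product equality from a real-valued trace of complex matrices.
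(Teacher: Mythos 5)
Your argument is correct and is the standard Tsirelson vectorization proof; the paper does not prove Lemma~\ref{acin} but simply cites \cite{Tsirelson:85b,toner:groth1}, so there is no in-paper proof to compare against. Your choices $A(a) = M\overline{A_a}$, $B(b) = B_b M$ with the real inner product $\operatorname{Re}\operatorname{tr}(X^\dagger Y)$ on $d\times d$ complex matrices (real dimension $2d^2$) give exactly the identity and the unit-norm conditions, using $A_a^2 = B_b^2 = \one$ and $\operatorname{tr}(M^\dagger M)=1$; this matches the construction in the cited sources.
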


\begin{proof}[ of Theorem~\ref{entxor}]
Let $n = 2d^2 + 1$, and consider the joint correlations described in
Eq.~\ref{eq:18}. By Lemma~\ref{tsirelson}, these correlations
are finitely quantum-realizable. To show they are not
$d$-quantum-realizable, we will show that they lie outside the convex
hull of the set of pure-$d$-quantum-realizable
correlations. 

We do this in the standard way, using a Bell inequality. Consider the
following linear function on the correlations.
\begin{align}
  \label{eq:21}
  B(E[ \alpha \beta | a b ]) = \int da db (a \cdot b) E[ \alpha \beta | a b ],
\end{align}
where the integral is over all unit vectors $a$, $b$.
Substituing for $E[ \alpha \beta | a b ]$ using Eq.~\ref{eq:18}, we have 
\begin{align}
  \label{eq:22}
  B(E[ \alpha \beta | a b ]) = \int da db (a \cdot b)^2.
\end{align}
For any pure-$d$-quantum-realizable correlations, by Lemma~\ref{acin}
there are vectors $A(a)$ and $B(b)$ in $\RR^{2d^2}$, such that the
resulting correlations are given by 
\begin{align}
  \label{eq:23}
  E[ \alpha \beta |a b ]_d = A(a) \cdot B(b).
\end{align}
Evaluating $B$ on these correlations, we must have
\begin{align}
  \label{eq:24}
B(E[ \alpha \beta | a b ]_d) &= \int da db (a \cdot b)
A(a) \cdot B(b)  \\
&\leq \max_{A,B} \int da db (a \cdot b)
A(a) \cdot B(b) \\
&\leq \frac{1}{ K^{\leq}_G(n \mapsto 2d^2)}    \int da db (a \cdot
b)^2\\
& = \frac{1}{ K^{\leq}_G(n \mapsto 2d^2)}  B(E[\alpha \beta|ab])\label{eq:28}.
\end{align}
where $K^{\leq}_G(n \mapsto 2d^2)$ is our lower bound on $K_G(n \mapsto
2 d^2)$. Since $n = 2d^2 + 1$ and
$K^{\leq}_G(n \mapsto n-1)>1$ by Theorem~\ref{vgrothlb}, we conclude that the correlations in
Eq.~\ref{eq:18} are not $d$-quantum-realizable.
\end{proof}

\subsection{Reducing the number of questions}
A possible objection to the example above is that the number of
questions is taken to be infinite. Here we reduce to a finite number
of questions by considering a discretization of the unit $n$-sphere by means of an $\eps$-net.

\begin{definition}[$\eps$-net]
For fixed $\eps>0$,  a set of vectors $E_n^{\eps} = \{w_1,w_2,\dots\in S_{n-1}\}$ is an $\eps$-net for $S_{n-1}$ if for all $a\in S_{n-1}$, there exists a vector $u\in E$ that satisfies $\|a-u\|_2\leq \eps$.
\end{definition}

\begin{lemma}
For $0 < \eps <1$, there is an $\eps$-net for $S_{n-1}$ with $|E_n^{\eps}| = (3/{\eps})^n$.
\end{lemma}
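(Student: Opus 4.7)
The plan is to use the standard volume-packing argument. First, I would construct $E_n^\eps$ greedily (or equivalently, by invoking Zorn's lemma / compactness of $S_{n-1}$) as a \emph{maximal} $\eps$-separated subset of $S_{n-1}$: that is, a set $E = \{w_1, w_2, \dots\} \subseteq S_{n-1}$ with $\|w_i - w_j\|_2 \geq \eps$ for all $i \neq j$, to which no further point of $S_{n-1}$ can be added while preserving this property. Maximality immediately implies the $\eps$-net property, since any $a \in S_{n-1}$ with $\|a - w_i\|_2 > \eps$ for every $i$ could be appended to $E$.

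Next I would bound $|E|$ by a volume comparison in $\RR^n$ (not on the sphere). Since the points of $E$ are $\eps$-separated, the open balls $B(w_i, \eps/2) \subset \RR^n$ of radius $\eps/2$ around them are pairwise disjoint. Since $\|w_i\|_2 = 1$, each such ball is contained in the ball $B(0, 1+\eps/2)$ centred at the origin. Letting $v_n$ denote the Lebesgue volume of the Euclidean unit ball in $\RR^n$, disjointness and containment give
\begin{align*}
|E| \cdot (\eps/2)^n\, v_n \;=\; \sum_{i} \mathrm{Vol}\bigl(B(w_i,\eps/2)\bigr) \;\leq\; \mathrm{Vol}\bigl(B(0,1+\eps/2)\bigr) \;=\; (1+\eps/2)^n\, v_n.
\end{align*}
In particular this shows $E$ is finite, and rearranging yields $|E| \leq \bigl((2+\eps)/\eps\bigr)^n$. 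Using $\eps \leq 1$, we have $(2+\eps)/\eps \leq 3/\eps$, so $|E| \leq (3/\eps)^n$, as claimed.

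There is no real obstacle here; the only minor subtlety is making sure the maximal $\eps$-separated set exists and is finite, but the volume bound itself gives finiteness, and one can alternatively build $E$ by greedy iteration (pick any $w_1 \in S_{n-1}$, then at each step pick any point of $S_{n-1}$ at distance $\geq \eps$ from all previously chosen points, stopping when no such point remains; the same volume argument forces termination after at most $(3/\eps)^n$ steps).
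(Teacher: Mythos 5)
Your proof is correct and follows essentially the same route as the paper: take a maximal $\eps$-separated subset of $S_{n-1}$ (Zorn or greedy), note maximality gives the net property, and bound the size via the standard volume-packing comparison of disjoint $\eps/2$-balls inside $B(0,1+\eps/2)$. The only cosmetic difference is that you explicitly carry the unit-ball volume $v_n$ through the computation before it cancels, and you explicitly remark that the volume bound also yields finiteness; the paper writes the ratio directly.
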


\begin{proof}We follow~\cite[Lemma II.4]{hayden04:_random_quant_states}. Let $E_n^\eps$ be a
  maximal set of vectors satisfying $\|u - v\|_2 \geq \eps$ for all
  $u,v\in E_n^\eps$, where the existence of such a set is guaranteed by
  Zorn's lemma. Then $E_n^\eps$ is an $\eps$-net for $S_{n-1}$. We
  bound $|E_n^\eps|$ using a volume argument. The open balls of radius
  $\eps/2$ around each point $u \in E_n^\eps$ are pairwise disjoint
  and all contained in the ball of radius $1+\eps/2$ about the
  origin. Hence 
  \begin{align}
    \label{eq:17}
    |E_n^\eps| \leq \frac{(1+\eps/2)^n}{(\eps/2)^n} =
    \biggl(\frac{2}{\epsilon} + 1\biggr)^n \leq \biggl( \frac{3}{\eps} \biggr)^n.
  \end{align}
\end{proof}

To convert the quantum correlations above (Eq.~\ref{eq:18}) into
ones with only a finite number of settings, fix $0 < \eps < 1$ (to be
chosen later) and let $E_n^\eps$ be an $\eps$-net for $S_{n-1}$. We
shall consider the following correlations. Alice's set of possible
measurements is $E_n^\eps$, and so is Bob's (note that we implicitly apply Lemma~\ref{tsirelson} here). If Alice performs a
measurement $u \in E_n^\eps$ and Bob a measurement $v \in E_n^\eps$, the joint correlation
should satisfy
\begin{align}
  \label{eq:20}
  E[\alpha \beta | uv] = u \cdot v,
\end{align}
just as in our earlier example. These correlations, being a subset of
those considered above, are finitely quantum-realizable.

The $\eps$-net divides the unit sphere into $|E_n^\eps|$ regions. (For
$u \in E_n^\eps$, let $R_u$ be the set of points on $S_{n-1}$ that are
closer to $u$ than to any other point in $E_n^\eps$, and assign points
equidistant to two or more points in the net in some arbitrary way.) 
Consider the Bell inequality
\begin{align}
  \label{eq:25}
B_{\text{finite}}(E[\alpha \beta|uv]) = \sum_{u \in E_n^\eps} \int_{a\in R_u} da  \sum_{v \in
    E_n^\eps}\int_{b \in R_v} db \, (a \cdot b) E[\alpha \beta|uv].
\end{align}
Evaluating this on the correlations $E[\alpha\beta|uv] = u \cdot v$,
we obtain
\begin{align}
  \label{eq:25}
B_{\text{finite}}(E[\alpha \beta|uv]) &= \sum_{u \in E_n^\eps} \int_{a\in R_u} da  \sum_{v \in
    E_n^\eps}\int_{b \in R_v} db \, (a \cdot b) (u \cdot v) 
\\ & \geq - 2\eps + \int da \int db (a \cdot b)^2\\
& = B(E[\alpha \beta | a b]) - 2\eps,
\end{align}
where we used
\begin{align}
  \label{eq:26}
u \cdot v &= a\cdot b + (u-a)\cdot b + u \cdot (v-b)\\ 
&\geq a \cdot b - \|u-a\|_2 - \|v-b\|_2 \\ &\geq a \cdot b - 2 \eps,
\end{align}
and related the value of the Bell inequality to the one earlier in
this section with an infinite number of questions.

Now consider a pure $d$-dimensional quantum strategy. Let $A(u)$ be the
$2d^2$-dimensional real unit vector associated with Alice's
measurement $u$ and $B(v)$ be the vector associated with Bob's
measurement $v$ by Lemma~\ref{acin}. 
This mapping induces a mapping for the correlations
where we had an infinite number of questions. First map $a$ to the
closest point $u$ in the $\eps$-net, then to the vector $A(u)$. 
We now evaluate the Bell
inequality with an infinite number of settings in terms of the one
with a finite number:
\begin{align}
  \label{eq:27}
B(E[\alpha \beta|ab]_d) &\geq \sum_{u \in E_n^\eps} \int_{a\in R_u} da  \sum_{v \in
    E_n^\eps}\int_{b \in R_v} db \, (a \cdot b) A(u) \cdot B(v)\\
&= \sum_{u \in E_n^\eps} \int_{a\in R_u} da  \sum_{v \in
    E_n^\eps}\int_{b \in R_v} db \, (a \cdot b) E[\alpha \beta|uv]\\
&=B_{\text{finite}}(E[\alpha \beta|uv]_d).
\end{align}
Combining the above calculations with Eq.~\ref{eq:28}, we get
\begin{align}
  \label{eq:29}
  B_{\text{finite}}(E[\alpha \beta|uv]_d) &\leq B(E[\alpha \beta|ab]_d)\\
  &\leq \frac{1}{ K^{\leq}_G(n \mapsto 2d^2)}  B(E[\alpha \beta|ab])\\
 & \leq \frac{1}{ K^{\leq}_G(n \mapsto 2d^2)} \left[ B_{\text{finite}}(E[\alpha \beta|ab]) +
  2\eps \right]
\end{align}

Now choose $n > 2d^2$ and $\eps$ such that $B_{\text{finite}}(E[\alpha
\beta |u v]_d)$ is strictly less than $B_{\text{finite}}(E[\alpha
\beta |u v])$. 

\section*{Acknowledgements}
We thank Stephanie Wehner for sharing with us a draft of
Ref.~\cite{wehner:_lower_bound_dimen_of_quant}.


\appendix
\section{Convenient form of $K_G(n\mapsto m)$}
\label{sec:conv-form-k_gnm}
Here we prove Lemma~\ref{lemma:1}. Observe that we can rewrite the
conventional definition of $\kgnm$ as 
\begin{align*}
K_G(n\mapsto m) =
\lim_{r\to\infty}\sup_{M_{ij}}\left(\frac{\max_{(a_i,b_j)}\sum_{i,j}M_{ij}{\vec
      a_i \cdot \vec b_j}}{\max_{(a_i',b_j')}\sum_{i,j}M_{ij}a_i'\cdot b_j'}\right).
\end{align*}
The following two propositions give the result.

\begin{proposition}\label{rgeql}
For all positive integers $r$, and for all
$r \times r$ real-valued matrices $M_{ij}$, there exists a measurable function
$M':S_{n-1}\times S_{n-1} \to [-1,1]$, such that
\begin{align}\label{rgeqleq}
\frac{\int \vec{da}\,\vec{db}M'(a,b)a\cdot b}{\max_{A,B:S_{n-1}\to
    S_{m-1}}\int \vec{da}\,\vec{db}M'(a,b)A(a)\cdot B(b)} \geq
\frac{\max_{(a_i,b_j)}\sum_{i,j}M_{ij}{\vec a_i \cdot \vec
    b_j}}{\max_{(a_i',b_j')}\sum_{i,j}M_{ij}a_i'\cdot b_j'}.
\end{align}
\end{proposition}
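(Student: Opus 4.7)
The plan is to construct $M'$ by placing small ``bumps'' near the vectors $(a^*_i,b^*_j)$ that attain the maximum in the numerator on the right-hand side of \eqref{rgeqleq}. First, I would reduce to the case where all $a^*_i$'s and all $b^*_j$'s are distinct, using the row-collapsing argument sketched informally right after Lemma~\ref{lemma:1}: if $a^*_i=a^*_j$, replacing the two rows of $M$ by their sum leaves the numerator on the right-hand side of \eqref{rgeqleq} unchanged and cannot increase the denominator (imposing $a'_i=a'_j$ can only shrink the max). Then, for $\epsilon>0$, I pick pairwise disjoint measurable neighborhoods $R_i\ni a^*_i$ and $R'_j\ni b^*_j$, all of common Haar measure $V>0$, small enough that $|a\cdot b-a^*_i\cdot b^*_j|\le\epsilon$ on $R_i\times R'_j$, and define
\begin{align*}
M'(a,b)=\begin{cases} M_{ij}/\|M\|_\infty & \text{if } (a,b)\in R_i\times R'_j,\\ 0 & \text{otherwise,}\end{cases}
\end{align*}
where $\|M\|_\infty=\max_{i,j}|M_{ij}|$. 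This gives a measurable function into $[-1,1]$.

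Next I compute both sides. The numerator of the left-hand side of \eqref{rgeqleq} becomes $\frac{V^2}{\|M\|_\infty}\sum_{i,j}M_{ij}\,\bar a_i\cdot\bar b_j$, where $\bar a_i=V^{-1}\int_{R_i}a\,da$ and $\bar b_j=V^{-1}\int_{R'_j}b\,db$; these average vectors lie in the closed unit ball of $\mathbb{R}^n$ and approach $a^*_i,b^*_j$ as the regions shrink. For the denominator, for any $A,B:S_{n-1}\to S_{m-1}$ I introduce $\tilde A_i=V^{-1}\int_{R_i}A(a)\,da$ and $\tilde B_j=V^{-1}\int_{R'_j}B(b)\,db$, each lying in the closed unit ball of $\mathbb{R}^m$. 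A short observation is that every point in the closed unit ball of $\mathbb{R}^m$ can be realized as such an average by a suitable measurable $A:R_i\to S_{m-1}$, and the bilinear form $\sum_{i,j}M_{ij}\,x_i\cdot y_j$ is linear in each $x_i$ and $y_j$ separately, so it attains its maximum over products of closed unit balls on the boundary; consequently
\begin{align*}
\max_{A,B}\int M'(a,b)\,A(a)\cdot B(b)\,da\,db=\frac{V^2}{\|M\|_\infty}\max_{a'_i,b'_j\in S_{m-1}}\sum_{i,j}M_{ij}\,a'_i\cdot b'_j,
\end{align*}
which is the denominator on the right-hand side of \eqref{rgeqleq} times the same prefactor $V^2/\|M\|_\infty$. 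The prefactors cancel in the ratio, and the numerator ratio tends to $\sum M_{ij}\,a^*_i\cdot b^*_j$ divided by the same denominator, i.e., to the right-hand side of \eqref{rgeqleq}, as $\epsilon\to 0$.

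The main obstacle is that this bump construction attains the target ratio only in the limit; for any fixed $\epsilon$ the numerator ratio differs from the target by $O(\epsilon)$ and the inequality can go either way by $O(\epsilon)$. This is not a genuine problem for the use of the proposition inside Lemma~\ref{lemma:1}, since that lemma involves a supremum over $M'$: taking a sequence $M'_\epsilon$ with $\epsilon\to 0$ shows $\sup_{M'}\text{ratio}(M')$ is at least the right-hand ratio, which is all that is needed to conclude $\sup_{M'}\text{ratio}(M')\ge K_G(n\mapsto m)$ after taking supremum over $M$. The remaining delicate point is to state the conclusion carefully: either absorb the error $\epsilon$ into the outer supremum by allowing a sequence rather than a single $M'$, or use a weak-$*$ compactness argument on the family $\{M'_\epsilon\}_{\epsilon>0}$ to pass to a limiting element witnessing the asserted inequality. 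The two technical observations making the construction go through cleanly are the row-collapsing reduction to distinct optimizers and the realizability of every closed-unit-ball vector as the region-average of some $A:R_i\to S_{m-1}$.
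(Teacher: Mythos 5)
Your construction mirrors the one the paper actually uses: the paper sets $M'(a,b)=\sum_{i,j}M_{ij}\,\delta(a-a_i^*)\,\delta(b-b_j^*)$, a sum of Dirac deltas at the optimizing grid, which makes the numerators of \eqref{rgeqleq} agree exactly and lets the left-hand denominator be bounded above by the right-hand one (the $\leq$ even absorbs possible coincidences among the $a_i^*$ without your row-collapsing step). Your bump construction is the regularized version of the same idea, and you correctly flag the rigour issue: a Dirac comb is not a measurable function into $[-1,1]$, so strictly speaking the paper's own proof does not exhibit an admissible $M'$ either.

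Where I would push back is on the repair. For any fixed $\eps>0$ the error goes only one way: the averages $\bar a_i,\bar b_j$ lie strictly inside the unit ball of $\RR^n$, so $\sum_{i,j}M_{ij}\,\bar a_i\cdot\bar b_j$ is strictly below $\max_{a_i,b_j\in S_{n-1}}\sum_{i,j}M_{ij}\,a_i\cdot b_j$, while, as you compute, the left-hand denominator matches the right-hand one exactly (up to the common prefactor $V^2/\|M\|_\infty$). Hence the left-hand ratio is strictly below the right-hand ratio for every fixed $\eps$; it is not an $O(\eps)$ error that ``can go either way,'' and no single $M'_\eps$ from your family witnesses \eqref{rgeqleq}. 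Your second proposed fix, weak-$*$ compactness on $\{M'_\eps\}$, does not close this: the $M'_\eps$ are uniformly bounded by $1$ but supported on sets of Haar measure tending to $0$, so in the weak-$*$ topology of $L^\infty$ (paired against $L^1$) they converge to the zero function, and the only nontrivial limit lives in the space of measures and is exactly the Dirac comb, which is outside the admissible class. Your first fix is the right one: prove the asymptotic statement (for every $\eta>0$ there is an admissible $M'$ achieving the ratio up to $\eta$) and absorb $\eta$ into the supremum in Lemma~\ref{lemma:1}; that is all the application needs. So: same idea as the paper, executed more carefully, with a correctly identified informality that is in fact shared by the paper's own proof and is best resolved by weakening the proposition to its asymptotic form rather than by a compactness argument.
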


\begin{proof}
Let $f,g: [r]\to S_{n-1}$ and $f',g':[r]\to S_{m-1}$ be vector valued
functions, and let $f^*$ and $g^*$ be such that they give a sequence
$(f^*(i),g^*(j))_{i,j=1}^r = (a_i^*,b_j^*)_{i=1}^r$ that maximizes
$\sum_{i,j}M_{ij}{\vec a_i \cdot \vec b_j}$. Set
$$
M'(\vec a, \vec b) = \sum_{i,j}M_{ij}\delta(a - a_i^*)\delta(b-b_j^*),
$$
where $\delta(\cdot)$ denotes the Dirac delta function. This causes the numerators of (\ref{rgeqleq}) to be equal. For the denominator of left-hand side of (\ref{rgeqleq}), we have
\begin{align}
\max_{A,B:S_{n-1}\to S_{m-1}}\int dadbM'(a,b)A(a)\cdot B(b) &=& \max_{A',B':(f^*(i),g^*(j))\to S_{m-1}}\sum_{i,j}M_{ij}A'(f^*(i))\cdot B'(g^*(j))\\
&\leq& \max_{f',g'}\sum_{i,j}M_{ij}f'(i)\cdot g'(j),
\end{align}
where the inequality follows because because the second maximization is over a subset of the set that the last equation is maximized over. This gives the result.
\end{proof}

\begin{proposition}\label{lgeqr}
For any measurable function $M'(a,b)$ with $a,b\in S_{n-1}$, and any $\eps> 0$, there exist an $r$ and matrix $M_{ij}\in\RR^r\times\RR^r$, such that
\begin{align}\label{lgeqreq}
 \frac{\max_{(a_i,b_j)}\sum_{i,j}M_{ij}{\vec a_i \cdot \vec b_j}}{\max_{(a_i',b_j')}\sum_{i,j}M_{ij}a_i'\cdot b_j'} \geq \frac{\int \vec{da}\,\vec{db}M'(a,b)a\cdot b}{\max_{A,B:\RR^n\to\RR^m}\int \vec{da}\,\vec{db}M'(a,b)A(a)\cdot B(b)} - \eps.
\end{align}
\end{proposition}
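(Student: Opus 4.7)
The plan is to Riemann-discretize the sphere $S_{n-1}$ with a sufficiently fine measurable partition $\{R_1,\ldots,R_r\}$ and build the matrix so that its Bell form evaluated on piecewise-constant $S_{m-1}$-valued maps matches the continuous Bell form of $M'$. I would fix $\eps>0$, pick a diameter $\delta>0$ to be specified at the end, cover $S_{n-1}$ by finitely many balls of radius $\delta/2$ (by compactness), refine into a measurable partition with $\mathrm{diam}(R_i)\leq\delta$, choose representatives $p_i\in R_i$, and set
\[
M_{ij}\;:=\;\int_{R_i\times R_j} M'(a,b)\,\vec{da}\,\vec{db}.
\]

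For the denominator, given any $a'_i,b'_j\in S_{m-1}$, let $A,B:S_{n-1}\to S_{m-1}$ be the piecewise-constant maps with $A(a)=a'_i$ on $R_i$ and $B(b)=b'_j$ on $R_j$. Fubini gives $\sum_{i,j}M_{ij}\,a'_i\cdot b'_j=\int \vec{da}\,\vec{db}\,M'(a,b)\,A(a)\cdot B(b)$, and since such $A,B$ are admissible in the continuous maximum, the discrete denominator satisfies $D_d\leq D(M')$, the continuous denominator. For the numerator, plug the representatives $a_i=p_i$, $b_j=p_j$ into the discrete max to get $N_d\geq \sum_{i,j}M_{ij}\,p_i\cdot p_j=\int \vec{da}\,\vec{db}\,M'(a,b)\,\tilde A(a)\cdot\tilde B(b)$ with piecewise-constant $\tilde A(a)=p_i,\tilde B(b)=p_j$. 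The diameter bound $\|\tilde A(a)-a\|,\|\tilde B(b)-b\|\leq\delta$ together with an add-and-subtract argument gives $|\tilde A(a)\cdot\tilde B(b)-a\cdot b|\leq 2\delta$, whence $N_d\geq N(M')-2\delta$ using $|M'|\leq 1$ and the normalization of Haar measure, where $N(M')$ denotes the continuous numerator.

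Combining $N_d\geq N(M')-2\delta$ with $D_d\leq D(M')$ and choosing $\delta=\eps\, D(M')/2$ yields $N_d/D_d\geq N(M')/D(M')-\eps$, at least in the main regime $N(M')\geq 2\delta$ with $D(M')>0$. The edge cases are handled separately: both $N_d$ and $D_d$ are nonnegative (evaluating either maximum on constant vectors $a_i\equiv v,\ b_j\equiv w$ gives $(v\cdot w)\sum_{i,j}M_{ij}$, which can be made to equal $|\sum_{i,j}M_{ij}|\geq 0$), so the claim is automatic whenever $N(M')/D(M')-\eps\leq 0$; and the degenerate case $D(M')=0$ forces $M'\equiv 0$ almost everywhere and is vacuous.

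I do not anticipate a major obstacle; the approach is a standard Riemann-discretization followed by a Fubini manipulation. The only slightly subtle point is the direction of the inequality in the final division step: for $(N(M')-2\delta)/D_d\geq (N(M')-2\delta)/D(M')$ to point the correct way after using $D_d\leq D(M')$, the numerator $N(M')-2\delta$ must be nonnegative, which is why the regime distinction is needed and why $\delta$ has to be chosen small relative to both $\eps$ and $N(M')$. Existence of the small-diameter partition, the Fubini identity, and the measurability of the piecewise-constant maps are all routine.
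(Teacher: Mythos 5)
Your proposal is correct and follows essentially the same route as the paper: discretize $S_{n-1}$ into small-diameter regions, set $M_{ij}=\int_{R_i\times R_j}M'$, bound the discrete denominator by the continuous one via piecewise-constant maps $A,B$, and bound the discrete numerator from below by plugging in representative points. Your version is in fact slightly tidier than the paper's (you use fixed representatives $p_i$ rather than the maximizing sequence when controlling the numerator error, and you address the sign and degenerate-denominator edge cases explicitly), but the underlying argument is the same.
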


\begin{proof}
First note that since $|M'(\cdot,\cdot)|$ is a measurable function, the integral $\int dadb|M'(a,b)|$ is bounded. Therefore, without loss of generality, we may assume that $\int dadb|M'(a,b)|=1$.

Suppose we divide the unit $n$-sphere up into $r$ disjoint regions $R_1,\dots,R_r\subseteq S_{n-1}$ whose sizes decrease with increasing $r$, and set
\begin{align*}
M_{ij} = \int_{a\in R_i}\int_{b\in R_j}dadbM'(a,b).
\end{align*}
Let $(a_i^*,b_j^*)_{i,j=1}^r$ be a sequence that maximizes
$\sum_{i,j}M_{ij}{\vec a_i \cdot \vec b_j}$, and define $\delta := \max_{i,j}\{|a_i^*\cdot b_j^* - a\cdot b|\mid a\in R_i,b\in R_j\}$. Then  by the triangle inequality, we have
\begin{align*}
\left|\sum_{i,j}\int_{a\in R_i}da\int_{b\in R_j}dbM(\vec a, \vec b)\left(a_i^*\cdot b_j^* - a\cdot b\right)\right| \leq \sum_{i,j}\int_{a\in R_i}da\int_{b\in R_j}db|M'(a,b)||a_i^*\cdot b_j^* - a\cdot b| \leq \delta.
\end{align*}
Hence, for the numerators we get:
$\max_{(a_i,b_j)}\sum_{i,j}M_{ij}{\vec a_i \cdot \vec b_j}\geq \int \vec{da}\,\vec{db}M'(a,b)a\cdot b - \delta$. For the denominators, we have
\begin{align*}
\max_{A,B:\RR^n\to\RR^m}\int \vec{da}\,\vec{db}M'(a,b)A(a)\cdot B(b) \geq \max_{(a_i',b_j')}\sum_{i,j}M_{ij}a_i'\cdot b_j',
\end{align*}
since  we can always pick $A(a) = a_i'$ and $B(b) = b_j'$ for all $a\in R_i$ and $b\in R_j$. The result follows from the fact that we can let $\delta$ become arbitrarily small by increasing $r$.
\end{proof}

\section{Proof that the bound on $\kgnm$ is nontrivial}
\label{app:b}
Here we establish the following lemma.
\begin{lemma}\label{lem:function}
  The function 
  \begin{align}
    \label{eq:13}
f(n) = \frac{1}{\sqrt{n}}\frac{\Gamma(\frac{n+1}{2})}{\Gamma(\frac{n}{2})}    
  \end{align}
is strictly increasing on integers $n = 1,2,\ldots$. 
\end{lemma}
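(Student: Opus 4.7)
The plan is to rewrite $f(n)$ in terms of the Wallis integrals $W_k := \int_0^{\pi/2} \sin^k\theta\, d\theta$, after which strict monotonicity follows from the Cauchy--Schwarz inequality combined with the standard Wallis recurrence. Using the beta-function identity $B(n/2, 1/2) = \Gamma(n/2)\Gamma(1/2)/\Gamma((n+1)/2)$ together with the trigonometric representation $B(n/2, 1/2) = 2 W_{n-1}$, I would derive the compact formula
\[
f(n) = \frac{\sqrt{\pi}}{2\sqrt{n}\, W_{n-1}}.
\]
The desired inequality $f(n+1) > f(n)$ is therefore equivalent to $(n+1) W_n^2 < n W_{n-1}^2$, or $W_n^2/W_{n-1}^2 < n/(n+1)$.

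For this last inequality I would apply Cauchy--Schwarz to the factorization $\sin^n\theta = \sin^{(n-1)/2}\theta \cdot \sin^{(n+1)/2}\theta$ on $(0, \pi/2)$, obtaining
\[
W_n^2 \;<\; W_{n-1}\, W_{n+1},
\]
with strict inequality because the two factors are not proportional on that interval (they differ by the nonconstant function $\sin\theta$). Combined with the standard integration-by-parts recurrence $(n+1)W_{n+1} = n\, W_{n-1}$ (valid for all $n \geq 1$), this yields
\[
W_n^2 \;<\; W_{n-1} W_{n+1} \;=\; \frac{n}{n+1}\, W_{n-1}^2,
\]
which is exactly the required strict inequality.

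Once the reduction to Wallis integrals is spotted the argument is very short, and I do not expect any substantive obstacle beyond finding that reduction. It is worth noting why a more direct Gamma-function attack is \emph{not} easy: the classical Wendel bound $\Gamma((n+1)/2)/\Gamma(n/2) \leq \sqrt{n/2}$, and likewise the Gautschi upper bound $\sqrt{(n+1)/2}$, are both too loose by essentially the factor $(n/(n+1))^{1/4}$ that distinguishes $f(n+1)$ from $f(n)$; one really needs a refined inequality at that precision. The Cauchy--Schwarz step on Wallis integrals supplies exactly this correction for free, which is why this route works cleanly for \emph{all} $n \geq 1$ without a separate small-$n$ case analysis.
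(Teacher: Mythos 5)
Your proposal is correct, and it takes a genuinely different route from the paper. The paper proves Lemma~\ref{lem:function} by a case split: for $n \leq 9$ it checks numerically, and for $n > 9$ it invokes the Robbins--Matsunawa two-sided Stirling-type bound on $\log\Gamma(x)$ followed by a somewhat delicate expansion of $\log\bigl(f(n+1)/f(n)\bigr)$ showing positivity of an explicit rational function. Your approach instead reduces $f(n)$ to Wallis integrals via $B(n/2,1/2)=\sqrt{\pi}\,\Gamma(n/2)/\Gamma((n+1)/2)=2W_{n-1}$, so that $f(n)=\sqrt{\pi}/(2\sqrt{n}\,W_{n-1})$, and then derives the required strict inequality $W_n^2 < \tfrac{n}{n+1}W_{n-1}^2$ from the strict Cauchy--Schwarz inequality $W_n^2 < W_{n-1}W_{n+1}$ (strict because $\sin^{(n-1)/2}\theta$ and $\sin^{(n+1)/2}\theta$ are not proportional on $(0,\pi/2)$) combined with the classical recurrence $(n+1)W_{n+1}=nW_{n-1}$. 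The benefit of your route is uniformity and transparency: it handles all $n\geq 1$ in one stroke, with no small-$n$ case analysis, no Stirling estimates, and no reliance on a rational-function positivity check; the paper's method, while entirely valid, buys little here beyond being self-contained in the language of $\Gamma$-function asymptotics already used elsewhere in the proof. Your diagnostic remark about the Wendel and Gautschi bounds being too coarse by a factor $(n/(n+1))^{1/4}$ is also accurate and correctly identifies why a naive $\Gamma$-ratio estimate fails, whereas the Cauchy--Schwarz step supplies exactly the missing precision.
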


\begin{proof}
  For $n\leq 9$, just evaluate $f(n)$. For $n>9$, we use the following bound on
  $\log \Gamma(x)$, first proved by
  Robbins~\cite{robbins55:_remar_of_stirl_formul} for
  integer values of $x$, but which Matsunawa observed~\cite[Remark 4.1]{matsunawa76:_some_inequal_based_inver_factor_series} is also valid for
  real values of $x\geq 2$:
\begin{align}
  \label{eq:15}
 \sqrt{2\pi} x^{x+1/2} e^{-x+1/(12x+1)} < \Gamma(x+1) <  \sqrt{2\pi} x^{x+1/2} e^{-x+1/(12x)}.
\end{align}
Using this bound, we obtain
\begin{align*}
\log \frac{f(n+1)}{f(n)} 
& = -\frac{1}{2} \log \Bigl( 1 + \frac{1}{n} \Bigr)  + \log
\frac{n}{2}+2\log \Gamma(\frac{n}{2}) -2 \log
\Gamma(\frac{{n+1}}{2})\\
& 
\geq -\frac{1}{2} \log \Bigl( 1 + \frac{1}{n} \Bigr) 
+  \log \Bigl( 1 + \frac{1}{n/2-1} \Bigr)
- n  \log \Bigl( 1 + \frac{1}{n-2} \Bigr)
+ \frac{2}{6n-11} - \frac{2}{6n-6}.
\end{align*}

Now use 
\begin{align}
  \label{eq:14}
\frac{1}{n} - \frac1{2n^2} + \frac1{3n^3} -
\frac1{4n^4} \leq 
\log \Bigl(1+\frac1{n}\Bigr)\leq \frac{1}{n} - \frac1{2n^2} + \frac1{3n^3},
\end{align}
(which is valid for all $n\geq 1$), and we obtain
\begin{align*}
\log \frac{f(m+10)}{f(m+9)} 
\geq \frac{14 m^7+679 m^6+13923 m^5+155346 m^4+1005620 m^3+3684139 m^2+6679947 m+3828140}{12 (m+7)^4 (m+8)
   (m+9)^3 (6 m+43)},
\end{align*}
which is obviously positive when $m\geq 0$, i.e., when $n\geq 9$. Thus $f(n)$ is strictly increasing. 
\end{proof}

\end{document}